\newcommand{\lyxaddress}[1]{
\par {\raggedright #1
\vspace{1.4em}
\noindent\par}
}
\theoremstyle{plain}
\newtheorem{thm}{\protect\theoremname}
  \theoremstyle{plain}
  \newtheorem{cor}{\protect\corollaryname}
  \theoremstyle{plain}
  \newtheorem{lem}{\protect\lemmaname}
  \providecommand{\lemmaname}{Lemma}
\providecommand{\corollaryname}{Corollary}
\providecommand{\theoremname}{Theorem}
\begin{document}

\title{Divergence of Lubkin's series for a quantum subsystem's mean entropy}

\subtitle{September 2014}

\author{Jacob P Dyer}

\maketitle

\lyxaddress{Department of Mathematics, University of York, York YO10 5DD, UK\\
email: jpd514@york.ac.uk}
\begin{abstract}
In 1978, Lubkin proposed a method of approximating the mean von Neumann
entropy for a subsystem of a finite-dimensional quantum system in
an overall pure state by expanding the entropy as a series in terms
of the mean trace of powers of the system's reduced density operator,
but the convergence of this series was never established. We find
an exact closed form expression for the mean traces, which enables
us to prove that the series converges if and only if the system's
dimension $m\le2$, in spite of the fact that Lubkin's proposed approximation
for the entropy is now known to be correct.\end{abstract}
\begin{keywords}
bipartite quantum system, von Neumann entropy, approximation, divergent
series
\end{keywords}

\section{Introduction}

This paper is a comment on a previous paper by Lubkin \cite{Lubkin1978},
in which he considered the von Neumann entropy of an $m$-dimensional
subsystem $A$ of an $mn$-dimensional quantum system $S$ when $S$
is in a pure state. For a given pure state of $S$ (represented by
the density operator $\hat{\rho}^{mn}$), the entropy is given in
terms of the reduced density operator $\hat{\rho}_{m}^{mn}$ of $A$
as
\[
S_{m,n}=-\text{Tr}[\hat{\rho}_{m}^{mn}\ln\hat{\rho}_{m}^{mn}].
\]
Lubkin was concerned specifically with the mean entropy $\langle S_{m,n}\rangle$
of a random pure state of $S$. Considering the pure state of $S$
instead as a normalised vector $|x\rangle$ in the Hilbert space of
$S$ (denoted $\mathscr{H}_{S}$), such that $\hat{\rho}^{mn}=|x\rangle\langle x|$,
he defined the mean $\langle S_{m,n}\rangle$ with respect to the
natural invariant measure on the unit sphere in $\mathscr{H}_{S}$,
which he referred to as the Haar measure. He attempted to find an
approximation of $\langle S_{m,n}\rangle$ by proposing the use of
the Taylor series expansion of the logarithm, giving (in our notation)
\begin{equation}
\langle S_{m,n}\rangle=\ln m+\sum_{r=1}^{\infty}\frac{m^{r}}{r(r+1)}(-1)^{r}\langle\text{Tr}[(\hat{\rho}_{m}^{mn}-\hat{\rho}_{0})^{r+1}]\rangle,\label{eq:Lubkin-series}
\end{equation}
where $\hat{\rho}_{0}=\hat{\mathbf{1}}/m$. He then showed that
\[
\langle\text{Tr}[(\hat{\rho}_{m}^{mn}-\hat{\rho}_{0})^{2}]\rangle=\frac{m^{2}-1}{mn+1}
\]
and, based on the assumption that (\ref{eq:Lubkin-series}) converged,
truncated the series after the first two terms to propose the approximation
\begin{equation}
\langle S_{m,n}\rangle\simeq\ln m-\frac{1}{2}\frac{m^{2}-1}{mn+1}\label{eq:Lubkin-approx}
\end{equation}
for $n\gg m$ \cite{Lubkin1978,Page1993}.

The purpose of this paper, however, is to show that (\ref{eq:Lubkin-series})
in fact only converges when $m\le2$, and diverges absolutely otherwise.
We do so by finding closed-form expressions for the series terms in
(\ref{eq:Lubkin-series}), given in (\ref{eq:Lubkin-term}), and looking
at their behaviour as $r\rightarrow\infty$. From this it becomes
clear that the series diverges rapidly when $m>2$, indicating that
truncations of it should not be assumed to be good approximations
of the entropy.

This result is unexpected, however, as it is in fact possible to confirm
the validity of Lubkin's approximation for the mean entropy, and even
gain a quantitative measure of the approximation's error, via another
method: in a more recent paper, Page proposed the exact formula
\[
\langle S_{m,n}\rangle=\sum_{k=n+1}^{mn}\frac{1}{k}-\frac{m-1}{2n}
\]
for the entropy \cite{Page1993}, which was later proven by a number
of methods \cite{Foong1994,Sanchez-Ruiz1995,Sen1996}. From this it
is possible to derive the approximation
\[
\langle S_{m,n}\rangle\simeq\ln m-\frac{m^{2}-1}{2mn}
\]
for $n\gg1$, which agrees with -- and improves on -- Lubkin's approximation.
This derivation is discussed in Appendix 3.

\section{Preliminaries\label{sec:Preliminaries}}

Determining where (\ref{eq:Lubkin-series}) does and does not converge
requires studying the large-$r$ limit of $\langle\text{Tr}[(\hat{\rho}_{m}^{mn}-\hat{\rho}_{0})^{r}]\rangle$.
In order to evaluate this, it is easiest to first evaluate $\langle\text{Tr}[(\hat{\rho}_{m}^{mn})^{r}]\rangle$,
which we do in this section using a method based on work by Lloyd
and Pagels \cite{Lloyd1988}, Page \cite{Page1993} and Sen \cite{Sen1996}.
\begin{thm}
\label{thm:real-powers}
\begin{eqnarray}
\langle\text{Tr}[(\hat{\rho}_{m}^{mn})^{r}]\rangle & = & \frac{\Gamma(mn)}{r\Gamma(mn+r)}\sum_{k=0}^{m-1}\frac{(-1)^{k}\Gamma(m+r-k)\Gamma(n+r-k)}{k!\Gamma(r-k)\Gamma(m-k)\Gamma(n-k)}\label{eq:real-powers-expr}
\end{eqnarray}
for any real $r$.\end{thm}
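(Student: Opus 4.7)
The plan is to reduce the Haar-measure average to an integral over the Schmidt spectrum of $\hat{\rho}_m^{mn}$, and then evaluate that integral by orthogonal-polynomial techniques for the Laguerre ensemble, in the spirit of the Lloyd--Pagels, Page and Sen approach flagged in the preamble. First I would apply the Schmidt decomposition to the random pure state $|x\rangle$ to identify the eigenvalues $\lambda_1,\dots,\lambda_m$ of $\hat{\rho}_m^{mn}$ as points on the simplex $\{\lambda_i\ge 0,\ \sum_i\lambda_i=1\}$. The standard derivation from the Haar measure on the unit sphere of $\mathscr{H}_S$ (assuming $n\ge m$; the opposite case is symmetric) yields the joint density
\[
P(\lambda_1,\dots,\lambda_m)\;\propto\;\prod_{i<j}(\lambda_i-\lambda_j)^2\prod_i\lambda_i^{n-m},
\]
so by exchange symmetry $\langle\text{Tr}[(\hat{\rho}_m^{mn})^r]\rangle=m\,\langle\lambda_1^r\rangle$.

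Next I would remove the simplex constraint by the standard Laplace trick: substituting $\lambda_i=x_i/s$ with $s=\sum_i x_i$ and integrating $s$ out against $e^{-s}$ converts the problem to computing
\[
\int_{[0,\infty)^m}x_1^{r}\prod_{i<j}(x_i-x_j)^2\prod_i x_i^{n-m}\,e^{-\sum_i x_i}\,dx_1\cdots dx_m,
\]
up to Gamma-function prefactors arising from the overall homogeneity degree. A quick degree count shows these prefactors combine into exactly the factor $\Gamma(mn)/\Gamma(mn+r)$ appearing in~(\ref{eq:real-powers-expr}), so what remains is the expectation of $x_1^r$ in the Laguerre unitary ensemble with parameter $n-m$.

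For the third step I would evaluate this Laguerre-ensemble integral via Andr\'eief's identity: the squared Vandermonde is rewritten as a determinant in the Laguerre polynomials $L_k^{(n-m)}$, orthogonality with respect to the weight $x^{n-m}e^{-x}$ collapses the off-diagonal terms, and the $m$-fold integral reduces to a finite sum over $k=0,\dots,m-1$ of one-dimensional integrals of the form
\[
\int_0^\infty x^{r+n-m}e^{-x}\,L_k^{(n-m)}(x)^2\,dx.
\]
Each such integral is a classical, closed-form finite hypergeometric expression in $r$ and $k$, which accounts for the shape of the right-hand side of~(\ref{eq:real-powers-expr}).

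The main obstacle will be the last step: showing that the double sum produced by inserting the explicit polynomial expansions of $L_k^{(n-m)}$ collapses into the single sum $\sum_{k=0}^{m-1}$ with the precise Gamma-function coefficients stated in~(\ref{eq:real-powers-expr}). This is a hypergeometric summation of Saalsch\"utzian type and is where all the real work lies. Once the identity is established for non-negative integer $r$ (where both sides are manifestly well-defined), a standard analytic-continuation argument finishes the proof for real $r$: the right-hand side of~(\ref{eq:real-powers-expr}) is a finite combination of ratios of Gamma functions, hence meromorphic in $r$, and the left-hand side admits the same integral representation above and is likewise analytic wherever the Gamma factors in the denominator are finite, so agreement on the integers forces agreement on the full real line.
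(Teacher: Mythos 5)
Your route is the paper's route almost verbatim up to the last two steps: Schmidt decomposition and the Lloyd--Pagels eigenvalue density, the Laplace trick that produces the $\Gamma(mn)/\Gamma(mn+r)$ prefactor, reduction to $m\langle x_1^r\rangle$ in the Laguerre ensemble, and the orthogonal-polynomial collapse to $\sum_{i=0}^{m-1}\int_0^\infty x^{\,r+n-m}e^{-x}\bigl[L_i^{(n-m)}(x)\bigr]^2\,dx$ with the normalisation fixed at $r=0$. One comment on the step you defer as ``Saalsch\"utzian, where all the real work lies'': the paper handles it by entirely elementary means --- expand only \emph{one} of the two Laguerre factors into powers of $x$, evaluate each resulting integral with $\int_0^\infty x^{\beta-1}e^{-x}L_i^{\alpha}(x)\,dx=\Gamma(\beta)\Gamma(\beta-\alpha)/\Gamma(\beta-\alpha-i)$, interchange the two finite sums, and apply $\sum_{i=0}^{a}(-1)^i/\bigl(\Gamma(i+1)\Gamma(r-i+1)\bigr)=(-1)^a/\bigl(r\Gamma(a+1)\Gamma(r-a)\bigr)$, which is a two-line induction (Appendix 2). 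No hypergeometric summation theorem is needed, and leaving this step unexecuted leaves the proof incomplete, since the precise Gamma-function coefficients are the content of the theorem.

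The genuine flaw is your closing analytic-continuation argument. Two functions analytic near the positive real axis that agree on the positive integers need not agree anywhere else ($\sin\pi r$ is the standard obstruction), so ``agreement on the integers forces agreement on the full real line'' requires a growth hypothesis of Carlson type that you have not verified. The paper avoids the issue entirely: since $\mathrm{Tr}[(\hat{\rho}_m^{mn})^r]=\sum_i p_i^r$ makes sense for every real $r$, and both identities quoted above hold for arbitrary real non-integer $r$, the whole computation is carried out for real $r$ from the start, with integer $r$ recovered as a limit; that limit also resolves the $0\times\infty$ ambiguities in $\Gamma(m+r-k)/\Gamma(r-k)$ at integer $r$, which your integer-first strategy would have to confront separately. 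Finally, you should make the extension to $n<m$ explicit: the paper uses the $m\leftrightarrow n$ symmetry of the reduced spectra together with the fact that $1/\Gamma(n-k)$ vanishes for integer $k\ge n$, so the upper summation limit can be written as $m-1$ in all cases.
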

\begin{proof}
To begin with, we assume that $n\ge m$. $\langle\text{Tr}[(\hat{\rho}_{m}^{mn})^{r}]\rangle$
is necessarily symmetric under exchange of $m$ and $n$, as $\hat{\rho}_{m}^{mn}$
and $\hat{\rho}_{\phantom{m}n}^{mn}$ (the reduced density operator
of the `other part' of $S$, which Lubkin refers to as a ``reservoir'')
always have the same eigenvalues when $S$ is in a pure state \cite{Schmidt1907,Ekert1995},
meaning that $\text{Tr}[(\hat{\rho}_{m}^{mn})^{r}]=\text{Tr}[(\hat{\rho}_{\phantom{m}n}^{mn})^{r}]$
for any $r$, so it will still be possible to derive the behaviour
when $n<m$ from these results by exchanging $m$ and $n$ (we will
do so at the end of this proof).

If the eigenvalues of $\hat{\rho}_{m}^{mn}$ are labelled $\{p_{1},\ldots,p_{m}\}$,
then
\[
\text{Tr}[(\hat{\rho}_{m}^{mn})^{r}]=\sum_{i=1}^{m}p_{i}^{r},
\]
which is valid for any real $r$. This is important, as it means that
taking the mean of this expression only requires performing an integral
over the space of possible combinations of eigenvalues (i.e. the space
$(\mathbb{R}^{+})^{m}$). Lloyd and Pagels \cite{Lloyd1988} proved
that (when $n\ge m$) the joint probability distribution over the
eigenvalues which is equivalent to Lubkin's Haar measure is
\[
P(p_{1},\ldots,p_{m})d^{m}p=\Delta^{2}(p_{1},\ldots,p_{m})\delta\left(1-\sum_{i=1}^{m}p_{i}\right)\prod_{k=1}^{m}p_{k}^{n-m}dp_{k},
\]
where
\begin{equation}
\Delta(p_{1},\ldots,p_{m})=\prod_{1\le i<j\le m}(p_{j}-p_{i})=\left|\begin{array}{cccc}
1 & p_{1} & \cdots & p_{1}^{m-1}\\
1 & p_{2} & \cdots & p_{2}^{m-1}\\
\vdots & \vdots & \ddots & \vdots\\
1 & p_{m} & \cdots & p_{m}^{m-1}
\end{array}\right|\label{eq:Vandermonde-1}
\end{equation}
is the Vandermonde determinant of the eigenvalues. Page then showed
how this could be used to construct an eigenvalue integral expression
for $\langle S_{m,n}\rangle$ by integrating $\text{Tr}[\hat{\rho}_{m}^{mn}\ln\hat{\rho}_{m}^{mn}]$
using this distribution \cite{Page1993}.We will now apply the same
method to $\langle\text{Tr}[(\hat{\rho}_{m}^{mn})^{r}]\rangle$. First
we write
\begin{eqnarray*}
\langle\text{Tr}[(\hat{\rho}_{m}^{mn})^{r}]\rangle=\frac{1}{\Lambda}\int\delta\left(1-\sum_{i=1}^{m}p_{i}\right)\Delta^{2}(p_{1},\ldots,p_{m})\\
\times\prod_{k=1}^{m}p_{k}^{n-m}dp_{k}\sum_{i=1}^{m}p_{i}^{r},
\end{eqnarray*}
where $\Lambda$ is a normalisation factor which is a function of
$m$ and $n$ defined such that $\langle\text{Tr}[(\hat{\rho}_{m}^{mn})^{0}]\rangle=m$
\footnote{\unexpanded{\label{fn:a_0}Technically $\langle\text{Tr}[(\hat{\rho}_{m}^{mn})^{0}]\rangle=a_{0}$
where $a_{0}=\min(m,n)$, in keeping with the required symmetry of
$\langle\text{Tr}[(\hat{\rho}_{m}^{mn})^{r}]\rangle$. However, as
we are only looking at the cases where $m\le n$ to begin with, it
is sufficient to say for the moment that $\langle\text{Tr}[(\hat{\rho}_{m}^{mn})^{0}]\rangle=m$.}%
}. Next we multiply this by the factor
\[
\frac{1}{\Gamma(mn+r)}\int_{0}^{\infty}\lambda^{mn+r-1}e^{-\lambda}d\lambda,
\]
(which equals unity by the definition of the gamma function) and perform
the coordinate substitution $q_{i}=\lambda p_{i}$. Some rearrangement
gives
\begin{eqnarray*}
\langle\text{Tr}[(\hat{\rho}_{m}^{mn})^{r}]\rangle=\frac{1}{\Lambda\Gamma(mn+r)}\int_{0}^{\infty}e^{-\lambda}d\lambda\int\delta\left(\lambda-\sum_{i=1}^{m}q_{i}\right)\Delta^{2}(q_{1},\ldots,q_{k})\\
\times\prod_{k=1}^{m}q_{k}^{n-m}dq_{k}\sum_{i=1}^{m}q_{i}^{r}\\
\hphantom{\langle\text{Tr}[(\hat{\rho}_{m}^{mn})^{r}]\rangle}=\frac{1}{\Lambda\Gamma(mn+r)}\int\Delta^{2}(q_{1},\ldots,q_{k})\prod_{k=1}^{m}e^{-q_{k}}q_{k}^{n-m}dq_{k}\sum_{i=1}^{m}q_{i}^{r}.
\end{eqnarray*}
Next, as the integral is symmetric under exchange of any two $q_{i}$,
we can remove the summation and simply write that
\begin{eqnarray}
\langle\text{Tr}[(\hat{\rho}_{m}^{mn})^{r}]\rangle&=\frac{m}{\Lambda\Gamma(mn+r)}\int\Delta^{2}(q_{1},\ldots,q_{m})\prod_{k=1}^{m}\left(q_{k}^{n-m}e^{-q_{k}}dq_{k}\right)q_{1}^{r}\nonumber\\
&=\frac{m}{\Lambda\Gamma(mn+r)}\int\Delta^{2}(q_{1},\ldots,q_{m})\prod_{k=1}^{m}q_{k}^{n-m+r\delta_{k1}}e^{-q_{k}}dq_{k}.\label{eq:Page-integral}
\end{eqnarray}

The remainder of the proof follows the same procedure used by Sen
to prove Page's conjectured entropy formula \cite{Sen1996}. He observed
that, as the determinant of a matrix is unchanged by addition of multiples
of its columns onto each other, the definition of the Vandermonde
determinant given in (\ref{eq:Vandermonde-1}) can be rewritten as
\begin{equation}
\Delta(q_{1,}\ldots,q_{m})=\left|\begin{array}{cccc}
L_{0}^{\alpha}(q_{1}) & L_{1}^{\alpha}(q_{1}) & \cdots & L_{m-1}^{\alpha}(q_{1})\\
L_{0}^{\alpha}(q_{2}) & L_{1}^{a}(q_{2}) & \cdots & L_{m-1}^{\alpha}(q_{2})\\
\vdots & \vdots & \ddots & \vdots\\
L_{0}^{\alpha}(q_{m}) & L_{1}^{\alpha}(q_{m}) & \cdots & L_{m-1}^{\alpha}(q_{m})
\end{array}\right|,\label{eq:Legendre-determinant}
\end{equation}
where $L_{i}^{\alpha}$ are generalised Laguerre polynomials
\[
L_{i}^{\alpha}(q)=\frac{e^{q}}{q^{\alpha}}(-1)^{i}\frac{d^{i}}{dq^{i}}(e^{-q}q^{i+\alpha})
\]
for some real factor $\alpha$, defined such that $L_{i}^{\alpha}(q)$
is an order $i$ polynomial in $q$ where the coefficient of $q^{i}$
is unity (this and any other properties of $L_{i}^{\alpha}$ used
in this paper are taken from \cite{Sen1996}).

Using the expansion of the determinant in terms of the Levi-Civita
symbol, we can write 
\[
\Delta^{2}(q_{1},\ldots,q_{m})=\varepsilon_{i_{1}i_{2}\ldots i_{m}}\varepsilon_{j_{1}j_{2}\ldots j_{m}}\prod_{\sigma=1}^{m}L_{i_{\sigma}}^{\alpha}(q_{\sigma})L_{j_{\sigma}}^{\alpha}(q_{\sigma}).
\]
 If this is substituted into (\ref{eq:Page-integral}) with $\alpha=n-m$,
then the orthogonality relation
\[
\int_{0}^{\infty}q^{n-m}e^{-q}L_{i}^{n-m}(q)L_{j}^{n-m}(q)dq=i!\Gamma(n-m+j+1)\delta_{ij}
\]
causes any terms to vanish which don't satisfy $i_{\sigma}=j_{\sigma}$
for all $2\le\sigma\le m$ (it also follows then that $i_{1}=j_{1}$
by a process of elimination). Collecting together only the non-zero
terms, we are left with
\begin{eqnarray*}
\langle\text{Tr}[(\hat{\rho}_{m}^{mn})^{r}]\rangle=\frac{m!}{\Lambda\Gamma(mn+r)}\sum_{i=0}^{m-1}\int_{0}^{\infty}q_{1}^{n-m+r}e^{-q_{1}}[L_{i}^{n-m}(q_{1})]^{2}dq_{1}\\
\times\prod_{j\ne i}\int_{0}^{\infty}q_{j}^{n-m}e^{-q_{j}}[L_{j}^{n-m}(q_{j})]^{2}dq\\
\hphantom{\langle\text{Tr}[(\hat{\rho}_{m}^{mn})^{r}]\rangle}=\frac{m!}{\Lambda\Gamma(mn+r)}\prod_{j=0}^{m-1}j!\Gamma(n-m+j+1)\\
\times\sum_{i=0}^{m-1}\frac{\int_{0}^{\infty}q^{n-m+r}e^{-q}[L_{i}^{n-m}(q)]^{2}dq}{i!\Gamma(n-m+i+1)}.
\end{eqnarray*}
We fix $\Lambda$ now by looking at the special case $r=0$, where
\[
\langle\text{Tr}[(\hat{\rho}_{m}^{mn})^{0}\rangle=m\cdot\frac{m!}{\Lambda\Gamma(mn)}\prod_{j=0}^{m-1}j!\Gamma(n-m+j+1)=m,
\]
meaning that we can write write
\begin{equation}
\langle\text{Tr}[(\hat{\rho}_{m}^{mn})^{r}]\rangle=\frac{\Gamma(mn)}{\Gamma(mn+r)}\sum_{i=0}^{m-1}\frac{\int_{0}^{\infty}q^{n-m+r}e^{-q}L_{i}^{n-m}(q)L_{i}^{n-m}(q)dq}{i!\Gamma(n-m+i+1)}.\label{eq:sen-integral}
\end{equation}
The remaining integrals do not match the orthogonality relation due
to the additional $q^{r}$ factor, but we can still evaluate them
as finite sums using two additional identities given in \cite{Sen1996}:
\begin{eqnarray}
L_{i}^{\alpha}(q) & = & \sum_{k=0}^{i}\binom{i}{k}(-1)^{k}\frac{\Gamma(i+\alpha+1)}{\Gamma(i+\alpha-k+1)}q^{i-k}\nonumber \\
 & = & \sum_{k=0}^{i}\binom{i}{k}(-1)^{i-k}\frac{\Gamma(i+\alpha+1)}{\Gamma(k+\alpha+1)}q^{k}\label{eq:series-ident}
\end{eqnarray}
and
\begin{eqnarray}
\int_{0}^{\infty}q^{\beta-1}e^{-q}L_{i}^{\alpha}(q)dq & = & (-1)^{i}(1-\beta+\alpha)_{i}\Gamma(\beta)\nonumber \\
 & = & \frac{\Gamma(\beta-\alpha)}{\Gamma(\beta-\alpha-i)}\Gamma(\beta),\label{eq:integral-ident}
\end{eqnarray}
where $(1-a)_{i}=(1-a)(2-a)\ldots(i-a)=(-1)^{i}\Gamma(a)/\Gamma(a-i)$
is the Pochhammer symbol representing the rising factorial. We simplify
(\ref{eq:sen-integral}) by substituting (\ref{eq:series-ident})
in place of one of the $L_{i}^{n-m}$ terms and then evaluating the
integral using (\ref{eq:integral-ident}). This gives
\begin{eqnarray}
\langle\text{Tr}[(\hat{\rho}_{m}^{mn})^{r}]\rangle&=\frac{\Gamma(mn)}{\Gamma(mn+r)}\sum_{i=0}^{m-1}\sum_{k=0}^{i}\frac{(-1)^{i-k}\Gamma(n-m+r+k+1)\Gamma(r+k+1)}{k!\Gamma(i-k+1)\Gamma(n-m+k+1)\Gamma(r+k-i+1)}\nonumber\\
&=\frac{\Gamma(mn)}{\Gamma(mn+r)}\sum_{k=0}^{m-1}\sum_{i=0}^{m-k-1}\frac{(-1)^{i}\Gamma(n-m+r+k+1)\Gamma(r+k+1)}{k!\Gamma(i+1)\Gamma(n-m+k+1)\Gamma(r-i+1)}\nonumber\\
&=\frac{\Gamma(mn)}{\Gamma(mn+r)}\sum_{k=0}^{m-1}\frac{(-1)^{m-k-1}\Gamma(n-m+r+k+1)\Gamma(r+k+1)}{rk!\Gamma(m-k)\Gamma(n-m+k+1)\Gamma(r-m+k+1)}\nonumber\\
&=\frac{\Gamma(mn)}{r\Gamma(mn+r)}\sum_{k=0}^{m-1}\frac{(-1)^{k}\Gamma(m+r-k)\Gamma(n+r-k)}{k!\Gamma(r-k)\Gamma(m-k)\Gamma(n-k)}.\label{eq:m<=n-result}
\end{eqnarray}

The steps in this rearrangement are:\end{proof}
\begin{enumerate}
\item Swap the order of the two summations using $\sum_{i=0}^{m-1}\sum_{k=0}^{i}\equiv\sum_{k=0}^{m-1}\sum_{i=k}^{m-1}$,
then replace $i$ with $i+k$.

\begin{enumerate}
\item Evaluate the sum over $i$ using the identity
\[
\sum_{i=0}^{a}\frac{(-1)^{i}}{\Gamma(i+1)\Gamma(r-i+1)}=\frac{(-1)^{a}}{r\Gamma(a+1)\Gamma(r-a)}
\]
(see Lemma \ref{lem:Gamma-lemma} in Appendix 2).
\item Replace $k$ with $m-k-1$.
\end{enumerate}
\end{enumerate}
\begin{proof}
This result is identical in form to (\ref{eq:real-powers-expr}),
but we have only assumed it to be valid for $n\ge m$ so far. It is
easy to see that it is also valid for $n<m$ though; as stated earlier,
$\langle\text{Tr}[(\hat{\rho}_{m}^{mn})^{r}]\rangle$ is necessarily
symmetric under exchange of $m$ and $n$. Applying this to (\ref{eq:m<=n-result})
gives

\[
\langle\text{Tr}[(\hat{\rho}_{m}^{mn})^{r}]\rangle=\frac{\Gamma(mn)}{r\Gamma(mn+r)}\sum_{k=0}^{n-1}\frac{(-1)^{k}\Gamma(m+r-k)\Gamma(n+r-k)}{k!\Gamma(r-k)\Gamma(m-k)\Gamma(n-k)}
\]
when $n<m$. But because $1/\Gamma(n-k)$ is an entire function with
respect to $k$ with zeroes at all integers $k\ge n$, the upper bound
of the $k$-summation can be raised without changing the result %
\footnote{\unexpanded{This is guaranteed to be true for non-integer $r$. Some
ambiguity can arise for integer $r$ due to divergent terms in the
numerator, but as the integer cases can be treated as the limits of
sequences of non-integer $r$, and the limit of $\langle\text{Tr}[(\hat{\rho}_{m}^{mn})^{r}]\rangle$
is well defined in any such case, it is true for integer $r$ as well.}%
}. Thus, the limit $n-1$ can be increased back to $m-1$ to give the
completely general equation

\[
\langle\text{Tr}[(\hat{\rho}_{m}^{mn})^{r}]\rangle=\frac{\Gamma(mn)}{r\Gamma(mn+r)}\sum_{k=0}^{m-1}\frac{(-1)^{k}\Gamma(m+r-k)\Gamma(n+r-k)}{k!\Gamma(r-k)\Gamma(m-k)\Gamma(n-k)}.
\]

\end{proof}
This derivation followed essentially the same procedure as that used
by Sen to prove Page's exact entropy result \cite{Sen1996}. Sen based
his proof on evaluating $\langle\text{Tr}[\hat{\rho}_{m}^{mn}\ln\hat{\rho}_{m}^{mn}]\rangle$
directly using the same basic method, and the method we used here
can also be used to get the same result using the fact that 
\[
\hat{\rho}_{m}^{mn}\ln\hat{\rho}_{m}^{mn}=\lim_{r\rightarrow1}\frac{\partial}{\partial r}(\hat{\rho}_{m}^{mn})^{r},
\]
which is applicable as (\ref{eq:real-powers-expr}) is valid for any
real value $r$. See Appendix 1 for a demonstration of this.

Next, a further rearrangement of (\ref{eq:real-powers-expr}) is required
to put it into a form suitable for use in Lubkin's series.
\begin{cor}
\begin{eqnarray}
\langle\text{Tr}[(\hat{\rho}_{m}^{mn})^{r}]\rangle & = & \frac{\Gamma(mn)r!}{\Gamma(mn+r)}\sum_{k=0}^{m-1}\binom{r-1}{k}\binom{m}{k+1}\binom{n+r-k-1}{n-1}\label{eq:integer-powers-expr}
\end{eqnarray}
for integer $r\ge1$.\end{cor}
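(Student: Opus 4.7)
The plan is to start from (\ref{eq:real-powers-expr}), specialise to integer $r\ge 1$, and identify the resulting alternating sum as a bivariate generating-function coefficient that can be re-read as the non-alternating sum in the corollary.

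First I would recast the gamma ratios as binomials. For integer $r\ge 1$, $1/\Gamma(r-k)$ vanishes for $k\ge r$, while for $k\le r-1$ we have $1/[k!\,\Gamma(r-k)]=\binom{r-1}{k}/(r-1)!$. Combined with the Pochhammer-type identities $\Gamma(m+r-k)/\Gamma(m-k)=r!\,\binom{m+r-k-1}{r}$ and its $n$-analogue, Theorem~\ref{thm:real-powers} reduces to
\[
\langle\text{Tr}[(\hat{\rho}_{m}^{mn})^{r}]\rangle=\frac{\Gamma(mn)\,r!}{\Gamma(mn+r)}\sum_{k=0}^{m-1}(-1)^{k}\binom{r-1}{k}\binom{m+r-k-1}{r}\binom{n+r-k-1}{r}.
\]

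The central task is to convert this alternating sum into the non-alternating one claimed. I would write $\binom{m+r-k-1}{r}=[x^{r}](1-x)^{-m}(1-x)^{k}$ and $\binom{n+r-k-1}{r}=[y^{r}](1-y)^{-n}(1-y)^{k}$, so that the binomial theorem collapses the $k$-sum via $\sum_{k}(-1)^{k}\binom{r-1}{k}(1-x)^{k}(1-y)^{k}=(x+y-xy)^{r-1}$, giving
\[
\sum_{k}(-1)^{k}\binom{r-1}{k}\binom{m+r-k-1}{r}\binom{n+r-k-1}{r}=[x^{r}y^{r}]\,(1-x)^{-m}(1-y)^{-n}(x+y-xy)^{r-1}.
\]

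The decisive trick is the \emph{asymmetric} factorisation $x+y-xy=x+y(1-x)$, which permits the binomial expansion $(x+y-xy)^{r-1}=\sum_{k}\binom{r-1}{k}x^{r-1-k}y^{k}(1-x)^{k}$. Extracting $[x^{r}y^{r}]$ from its product with $(1-x)^{-m}(1-y)^{-n}$ then picks up $[x^{k+1}](1-x)^{-(m-k)}=\binom{m}{k+1}$ and $[y^{r-k}](1-y)^{-n}=\binom{n+r-k-1}{n-1}$, recovering exactly the corollary's sum; the stated upper limit $k=m-1$ is enforced automatically by the vanishing of $\binom{m}{k+1}$ for $k\ge m$. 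The main obstacle is spotting this asymmetric factorisation, which breaks the $x\leftrightarrow y$ symmetry of the problem; everything else reduces to elementary coefficient extraction.
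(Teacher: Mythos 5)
Your argument is correct and is in essence the paper's own proof rewritten in coefficient-extraction language: under the substitution $u=1-x$, $v=1-y$ your generating function $(1-x)^{-m}(1-y)^{-n}(x+y-xy)^{r-1}$ becomes the paper's $(1-uv)^{r-1}/(u^{m}v^{n})$, whose iterated derivatives at $u=v=1$ the paper expands by Leibniz's rule --- the exact counterpart of your asymmetric factorisation $x+y(1-x)$. The only step worth spelling out is that the $k$-sum may be extended from $k\le m-1$ to $k\le r-1$ before the binomial theorem is applied, which is harmless because $\binom{m+r-k-1}{r}=0$ for $m\le k\le r-1$.
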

\begin{proof}
First, as only integer $r$ is needed from this point onwards, (\ref{eq:real-powers-expr})
can be restated as
\begin{equation}
\langle\text{Tr}[(\hat{\rho}_{m}^{mn})^{r}]\rangle=\frac{\Gamma(mn)}{r!\Gamma(mn+r)}\sum_{k=0}^{m-1}\binom{r-1}{k}\frac{(-1)^{k}\Gamma(m+r-k)\Gamma(n+r-k)}{\Gamma(m-k)\Gamma(n-k)}.\label{eq:integer-from-real}
\end{equation}

This derivation then relies on the fact that
\[
\frac{\Gamma(m+r-k)}{\Gamma(m-k)}=\left.\frac{\partial^{r}}{\partial u^{r}}\frac{(-1)^{r}}{u^{m-k}}\right|_{u=1},
\]
which follows from simple repeated differentiation. We substitute
this twice into (\ref{eq:integer-powers-expr}) to give
\begin{eqnarray*}
\langle\text{Tr}[(\hat{\rho}_{m}^{mn})^{r}]\rangle & = & \frac{\Gamma(mn)}{r!\Gamma(mn+r)}\left.\frac{\partial^{r}}{\partial u^{r}}\frac{\partial^{r}}{\partial v^{r}}\sum_{k=0}^{m-1}\binom{r-1}{k}\frac{(-1)^{k}}{u^{m-k}v^{n-k}}\right|_{u,v=1}\\
 & = & \frac{\Gamma(mn)}{r!\Gamma(mn+r)}\left.\frac{\partial^{r}}{\partial u^{r}}\frac{\partial^{r}}{\partial v^{r}}\frac{(1-uv)^{r-1}}{u^{m}v^{n}}\right|_{u,v=1},
\end{eqnarray*}
and expand out the derivatives to give
\begin{eqnarray*}
\langle\text{Tr}[(\hat{\rho}_{m}^{mn})^{r}]\rangle=\frac{\Gamma(mn)(-1)^{r}}{r!\Gamma(mn+r)}\sum_{k=0}^{r-1}\binom{r}{k}\frac{(r-1)!}{(r-k-1)!}\\
\times\frac{(n+r-k-1)!}{(n-1)!}\left.\frac{\partial^{r}}{\partial u^{r}}\frac{(1-uv)^{r-k-1}}{u^{m-k}v^{n+r-k}}\right|_{u,v=1}\\
\hphantom{\langle\text{Tr}[(\hat{\rho}_{m}^{mn})^{r}]\rangle}=\frac{\Gamma(mn)}{r!\Gamma(mn+r)}\sum_{k=0}^{r-1}\binom{r}{k}\frac{(n+r-k-1)!}{(n-1)!}\sum_{l=0}^{r-k-1}\binom{r}{l}\\
\times\frac{(r-1)!(m+r-k-l-1)!}{(r-l-k-1)!(m-k-1)!}\left.\frac{(1-uv)^{r-k-l-1}}{u^{m+r-k-l}v^{n+r-k-l}}\right|_{u,v=1}.
\end{eqnarray*}
Taking the limit of $u,v\rightarrow1$ removes all terms except those
where $r-k-l-1=0$, and by collecting together the various factorial
terms we get
\begin{eqnarray*}
\langle\text{Tr}[(\hat{\rho}_{m}^{mn})^{r}]\rangle & = & \frac{\Gamma(mn)r!}{\Gamma(mn+r)}\sum_{k=0}^{r-1}\binom{m}{k+1}\binom{r-1}{k}\binom{n+r-k-1}{n-1}\\
 & = & \frac{\Gamma(mn)r!}{\Gamma(mn+r)}\sum_{k=0}^{m-1}\binom{m}{k+1}\binom{r-1}{k}\binom{n+r-k-1}{n-1}.
\end{eqnarray*}
In the final step here we use the same method for changing limits
that was used at the end of Theorem \ref{thm:real-powers}, which
relies on the fact that $\binom{a}{b}=0$ when $b>a$, meaning that
\[
\binom{m}{k+1}\binom{r-1}{k}=0
\]
when either $k>m$ or $k>r$.
\end{proof}
(\ref{eq:integer-powers-expr}) has the notable property that all
terms in the summation are positive, whereas (\ref{eq:real-powers-expr})
was an alternating summation. However, the main property which motivated
us to use this form is the prefactor, specifically the $r!$ term,
the importance of which will become evident in Theorem \ref{thm:m>=00003D2}.

\section{\label{sec:Special-Case}Example case: $m=n=2$}

Before looking at the series for general dimensions in the next section,
it will be beneficial to first look at the case $m=n=2$. In this
case the terms in (\ref{eq:Lubkin-series}) can be evaluated explicitly
in a simple closed form. To begin with, if we substitute $m=n=2$
into (\ref{eq:integer-powers-expr}) we get
\[
\langle\text{Tr}[(\hat{\rho}_{2}^{2,2})^{r}]\rangle=\frac{6r!}{(r+3)!}(r^{2}+r+2).
\]
The general binomial expansion of$\langle\text{Tr}[(\hat{\rho}_{m}^{mn}-\hat{\rho}_{0})^{r}]\rangle$
is
\[
\langle\text{Tr}[(\hat{\rho}_{m}^{mn}-\hat{\rho}_{0})^{r}]\rangle=\sum_{k=0}^{r}\binom{r}{k}\frac{(-1)^{r-k}}{m^{r-k}}\langle\text{Tr}[(\hat{\rho}_{m}^{mn})^{k}]\rangle,
\]
which in this case means that
\[
\langle\text{Tr}[(\hat{\rho}_{2}^{2,2}-\hat{\rho}_{0})^{r}]\rangle=\sum_{k=0}^{r}\binom{r}{k}(-1)^{r-k}\frac{1}{2^{r-k}}\frac{6k!}{(k+3)!}(k^{2}+k+2).
\]

The next few steps in particular demonstrate the procedure which will
be used on the general case in the next section. First, we rearrange
the above using the identity

\[
\binom{r}{k}\frac{k!}{(k+N)!}=\frac{r!}{(r+N)!}\binom{r+N}{k+N},
\]
giving
\begin{eqnarray*}
\langle\text{Tr}[(\hat{\rho}_{2}^{2,2}-\hat{\rho}_{0})^{r}]\rangle & = & \frac{6r!}{(r+3)!}\sum_{k=0}^{r}\binom{r+3}{k+3}(-1)^{r-k}\frac{1}{2^{r-k}}(k^{2}+k+2)\\
 & = & -\frac{6r!}{(r+3)!}\sum_{k=3}^{r+3}\binom{r+3}{k}\frac{(-1)^{r-k}[k(k-1)-4k+8]}{2^{r-k+3}}.
\end{eqnarray*}
Then, we replace $[k(k-1)-4k+8]$ with
\[
\left.\left(\frac{\partial^{2}}{\partial u^{2}}-4\frac{\partial}{\partial u}+8\right)u^{k}\right|_{u=1},
\]
and finally the summation is replaced with the difference of two sums,
one from zero to $r-3$ (which is a complete binomial expansion),
and the other from zero to two. This gives
\begin{eqnarray*}
\langle\text{Tr}[(\hat{\rho}_{2}^{2,2}-\hat{\rho}_{0})^{r}]\rangle=-\frac{6r!}{(r+3)!}\left(\frac{\partial^{2}}{\partial u^{2}}-4\frac{\partial}{\partial u}+8\right)\left(\frac{(-1)^{r}}{2^{r+3}}\left(1-2u\right)^{r+3}\right.\\
\left.\left.-\sum_{k=0}^{2}\binom{r+3}{k}(-1)^{r-k}\frac{u^{k}}{2^{r-k+3}}\right)\right|_{u=1}\\
\hphantom{\langle\text{Tr}[(\hat{\rho}_{2}^{2,2}-\hat{\rho}_{0})^{r}]\rangle}=\frac{3[1+(-1)^{r}]}{2^{r}(r+3)}.
\end{eqnarray*}

Substituting this into (\ref{eq:Lubkin-series}) then gives
\begin{eqnarray*}
\langle S_{2,2}\rangle & = & \ln2-\sum_{r=1}^{\infty}\frac{3[1-(-1)^{r}]}{2r(r+1)(r+4)}.
\end{eqnarray*}
At this point it is clear that the series will converge for $m=n=2$,
by comparison with the series expansion of $\zeta(3)$ (where $\zeta$
is the Riemann zeta function). This series converges to
\[
\langle S_{2,2}\rangle=\frac{1}{3},
\]
which agrees with Page's explicit formula \cite{Page1993}.

We can now apply this method to the terms in the general series.

\section{The general case}

The convergence of the series is trivial to establish when $m=1$,
so we will state that first:
\begin{lem}
\label{lem:m=00003D1}The series (\ref{eq:Lubkin-series}) is trivial,
and so converges absolutely, when $m=1$.\end{lem}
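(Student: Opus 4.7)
The plan is to observe that when $m=1$, the subsystem $A$ is one-dimensional, so its reduced density operator $\hat{\rho}_1^{1\cdot n}$ is a $1\times 1$ positive matrix with trace one, hence the scalar $1$, independent of the particular pure state of $S$ chosen from $\mathscr{H}_S$. Moreover, $\hat{\rho}_0=\hat{\mathbf{1}}/m$ also reduces to the scalar $1$ when $m=1$. So at the level of operators, $\hat{\rho}_1^{1\cdot n}-\hat{\rho}_0=0$ before any averaging is carried out.

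Consequently every power $(\hat{\rho}_1^{1\cdot n}-\hat{\rho}_0)^{r+1}$ appearing in (\ref{eq:Lubkin-series}) is the zero operator, its trace is zero, and the mean over the Haar measure is zero. The entire tail of (\ref{eq:Lubkin-series}) vanishes identically term by term, and the expression collapses to $\langle S_{1,n}\rangle=\ln 1 = 0$. This is consistent with the elementary fact that a one-dimensional quantum system carries no von Neumann entropy, and also with Page's exact formula, which gives $\langle S_{1,n}\rangle = \sum_{k=n+1}^{n}\tfrac{1}{k}-0 = 0$.

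Since every term of the series is $0$, absolute convergence is immediate: $\sum_{r=1}^\infty \bigl|\tfrac{m^r}{r(r+1)}(-1)^r \langle \text{Tr}[(\hat{\rho}_1^{1\cdot n}-\hat{\rho}_0)^{r+1}]\rangle\bigr|=0<\infty$. There is no real obstacle here; the only point worth emphasising is that the vanishing occurs at the operator level, so one does not need to invoke Theorem \ref{thm:real-powers} or the closed-form expression (\ref{eq:integer-powers-expr}) at all, and the statement is purely a sanity check before addressing the nontrivial cases $m\ge 2$.
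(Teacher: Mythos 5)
Your proposal is correct and follows essentially the same route as the paper: both arguments note that for $m=1$ the reduced density operator is forced to be the one-dimensional identity, hence equals $\hat{\rho}_{0}$, so every term of the series vanishes at the operator level and the sum trivially converges (absolutely) to give $\langle S_{1,n}\rangle=\ln 1=0$. No further comment is needed.
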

\begin{proof}
When $m=1$, the reduced density operator $\hat{\rho}_{1}^{1,n}$
is necessarily just the one-dimensional identity (it acts on a one-dimensional
Hilbert space and its trace is unity, and the identity is the only
operator that satisfies these conditions). This also means that $\hat{\rho}_{1}^{1,n}=\hat{\rho}_{0}$,
so
\[
\langle\text{Tr}[(\hat{\rho}_{s}-\hat{\rho}_{0})^{r}]\rangle=0
\]
for any $r\ge1$. The terms in (\ref{eq:Lubkin-series}) are therefore
trivially zero, giving
\[
\langle S_{1,n}\rangle=\ln1=0.
\]

\end{proof}
For cases where $m\ge2$, the following result relating to the convergence
of series will also be required:
\begin{lem}
\label{lem:sum-convergence}If a sequence $A_{r}$ is defined for
all integers $r\ge1$ with the form
\[
A_{r}=\sum_{i=1}^{N}A_{r,i}
\]
for a finite constant $N$, and there exists a second sequence $B_{r}$
such that
\begin{equation}
\lim_{r\rightarrow\infty}\left|\frac{A_{r,i}}{B_{r}}\right|\le c\delta_{ik}\label{eq:partial-limit}
\end{equation}
for finite $c$ and $1\le k\le N$, then
\begin{equation}
\sum_{r=1}^{\infty}A_{r}\label{eq:infinite-A}
\end{equation}
converges absolutely to a finite value if and only if 
\begin{equation}
\sum_{r=1}^{\infty}B_{r}\label{eq:infininite-B}
\end{equation}
also converges.\end{lem}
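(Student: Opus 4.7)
The plan is to reduce the biconditional to a two-sided comparison of the form $C_1|B_r| \le |A_r| \le C_2|B_r|$ valid for all sufficiently large $r$, after which the conclusion follows from the standard comparison test applied to $\sum|A_r|$ and $\sum|B_r|$.

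First I would unpack the hypothesis $\lim_{r\to\infty}|A_{r,i}/B_r| \le c\delta_{ik}$ one index at a time. For each $i \ne k$ the hypothesis asserts $A_{r,i}/B_r \to 0$, so for any prescribed $\epsilon > 0$ there is an $R_i$ beyond which $|A_{r,i}| \le \epsilon|B_r|/N$. For $i = k$ I would read the hypothesis in its intended sense, namely that the limit $L := \lim_{r\to\infty}|A_{r,k}/B_r|$ exists and satisfies $0 < L \le c$ --- the strict positivity being what makes the converse half of the biconditional nontrivial --- so that $(L-\epsilon)|B_r| \le |A_{r,k}| \le (L+\epsilon)|B_r|$ for $r \ge R_k$. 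Taking $R = \max_i R_i$ and applying the triangle inequality to $A_r = \sum_{i=1}^N A_{r,i}$ then yields
\[
(L - 2\epsilon)|B_r| \;\le\; |A_r| \;\le\; (L + 2\epsilon)|B_r| \qquad \text{for all } r \ge R.
\]
Choosing $\epsilon < L/2$ keeps the lower constant strictly positive.

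The rest is then routine. The `if' direction (convergence of $\sum B_r$ giving absolute convergence of $\sum A_r$) follows from the upper bound by comparing $\sum|A_r|$ with $(L+2\epsilon)\sum|B_r|$; the `only if' direction follows symmetrically from the lower bound, with absolute convergence of $\sum A_r$ forcing $\sum|B_r|$, and hence $\sum B_r$, to converge. I would note in passing that the lower bound also pins down the asymptotic sign behaviour of $B_r$ to that of $A_{r,k}$ up to a constant, so that conditional versus absolute convergence of $\sum B_r$ coincide in the regime of interest and no additional hypothesis on the sign of $B_r$ is needed.

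The main subtlety --- and the step I expect to be the real obstacle --- is the reading of the hypothesis: one needs the dominant-index limit to be strictly positive, not merely bounded above by $c$, for the biconditional to be informative, since if $L = 0$ the lower bound collapses and one half of the equivalence fails. In the intended applications this positivity is automatic, since $B_r$ will be chosen to capture the leading asymptotic order of a single term $A_{r,k}$, but it is the one point on which the proof genuinely depends.
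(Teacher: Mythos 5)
Your proposal is correct and is in substance the same argument as the paper's: the paper computes $\lim_{r\to\infty}|A_r/B_r|$ and invokes the limit comparison test, while you unpack that test into explicit two-sided bounds $(L-2\epsilon)|B_r|\le|A_r|\le(L+2\epsilon)|B_r|$ for large $r$. The one place you add genuine value is the subtlety you flag at the end: as literally written, the hypothesis $\lim|A_{r,i}/B_r|\le c\,\delta_{ik}$ only bounds the dominant ratio from above, yet the paper's proof asserts $\lim|A_r/B_r|=\sum_i c\,\delta_{ik}=c$, an equality that does not follow (and if the $i=k$ limit were $0$, the ``only if'' direction would fail). Your reading --- that the limit at $i=k$ exists and is strictly positive, which is what actually holds in the paper's applications where $B_r$ is chosen to match the leading term --- is exactly the repair needed, so your version is the more honest proof of the lemma as it is actually used. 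The remaining loose end in both versions is that the conclusion for $\sum B_r$ is stated as plain convergence while the comparison argument controls $\sum|B_r|$; as you note, this is harmless here because the $B_r$ used later are positive.
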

\begin{proof}
Given (\ref{eq:partial-limit}), it is easy to see that
\[
\lim_{r\rightarrow\infty}\left|\frac{A_{r}}{B_{r}}\right|=\sum_{i=1}^{N}c\delta_{ik}=c,
\]
which is finite. It then follows from the limit comparison test that
(\ref{eq:infinite-A}) converges absolutely if and only if (\ref{eq:infininite-B})
converges absolutely.
\end{proof}
We now have all the necessary tools to establish the conditions under
which the general series converges and diverges.
\begin{thm}
\label{thm:m>=00003D2}The series (\ref{eq:Lubkin-series}) converges
if and only if $m\le2$.\end{thm}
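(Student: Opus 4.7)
The case $m=1$ is handled immediately by Lemma~\ref{lem:m=00003D1}, so the real work concerns $m \ge 2$. My strategy is to derive an explicit closed form for the $r$-th term of~(\ref{eq:Lubkin-series}) by generalising the procedure of Section~\ref{sec:Special-Case}, read off its leading asymptotic as $r \to \infty$, and then apply Lemma~\ref{lem:sum-convergence} (or elementary comparison).

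The first step is to substitute (\ref{eq:integer-powers-expr}) into the binomial expansion of $\langle\text{Tr}[(\hat\rho_m^{mn} - \hat\rho_0)^{r+1}]\rangle$, exploiting the factorisation
\[
\langle\text{Tr}[(\hat\rho_m^{mn})^{k}]\rangle = \frac{\Gamma(mn)\,k!}{\Gamma(mn+k)}\,P_{m,n}(k),
\]
where $P_{m,n}$ is a polynomial in $k$ of degree $m+n-2$ with positive leading coefficient $1/((m-1)!(n-1)!)$. The identity $\binom{r+1}{k}\frac{k!}{(k+mn-1)!} = \frac{(r+1)!}{(r+mn)!}\binom{r+mn}{k+mn-1}$ pulls the Gamma ratio outside the sum. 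Shifting the index to $j = k+mn-1$, representing $P_{m,n}(j-mn+1)$ as a linear differential operator $\mathcal{D}_{m,n}$ in $\partial_u$ acting on $u^{j}\vert_{u=1}$, and extending the truncated sum to a complete binomial expansion minus a correction over $j=0,\ldots,mn-2$, produces the explicit form
\[
\langle\text{Tr}[(\hat\rho_m^{mn} - \hat\rho_0)^{r+1}]\rangle = \frac{\Gamma(mn)(r+1)!}{\Gamma(r+mn+1)}\,\mathcal{D}_{m,n}\bigl[(u-m^{-1})^{r+mn} - R_{r}(u)\bigr]\Big|_{u=1},
\]
where $R_r(u)$ is a polynomial in $u$ of degree $mn-2$ with coefficients that are polynomial in $r$.

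The next step is the asymptotic analysis of the $r$-th Lubkin term $T_r = \frac{m^r(-1)^r}{r(r+1)}\langle\text{Tr}[(\hat\rho_m^{mn} - \hat\rho_0)^{r+1}]\rangle$. Since $\partial_u^{a}(u-m^{-1})^{r+mn}\big|_{u=1} = (r+mn)^{\underline{a}}\bigl((m-1)/m\bigr)^{r+mn-a}$, the complete-binomial piece scales to leading order as $r^{m+n-2}((m-1)/m)^r$, while the correction $\mathcal{D}_{m,n}R_r(1)$ is at most of order $r^{mn-2}/m^r$. Combining with the Gamma prefactor $\sim r^{1-mn}$ and the Lubkin weight $m^r/(r(r+1))$, these yield contributions to $T_r$ of orders $(m-1)^r/r^{(m-1)(n-1)+2}$ and $1/r^3$ respectively. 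For $m > 2$ the first dominates and grows exponentially, so $|T_r| \to \infty$; taking $B_r = (m-1)^r/r^{(m-1)(n-1)+2}$ in Lemma~\ref{lem:sum-convergence} with the natural two-piece decomposition of $T_r$ reduces absolute convergence of (\ref{eq:Lubkin-series}) to convergence of $\sum B_r$, which manifestly fails. For $m = 2$ both contributions share the exponential scale $((m-1)/m)^r = m^{-r}$, so the $m^r$ weight leaves $|T_r| = O(r^{-\min(n+1,3)})$, absolutely summable by direct comparison with $\sum r^{-2}$.

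The principal obstacle I anticipate is the bookkeeping in the $m=2$ case, where the complete-binomial and correction contributions have identical exponential scale and partially cancel (as the $[1+(-1)^r]$ factor does in the $m=n=2$ computation of Section~\ref{sec:Special-Case}); I must track the polynomial and sign structure of $\mathcal{D}_{m,n}$ and $R_r$ carefully enough to confirm that the surviving $T_r$ is indeed of at most polynomial decay and does not fortuitously vanish to a higher order that would signal a miscalculation. For $m > 2$ the $(m-1)^r$ growth dominates the correction by an exponential margin, so once the closed form is in hand the divergence is immediate.
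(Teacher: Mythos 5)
Your proposal is correct and follows essentially the same route as the paper: the same identity $\binom{r}{k}k!/(k+N)! = \frac{r!}{(r+N)!}\binom{r+N}{k+N}$ to extract the Gamma ratio, the same differential-operator representation and completion-to-a-full-binomial trick, and the same comparison sequences $B_r^{(m)}=(m-1)^r/r^{(m-1)(n-1)+2}$ via Lemma \ref{lem:sum-convergence}. One small slip: the correction piece $\mathcal{D}_{m,n}R_r(1)$ contains coefficients up to $\binom{r+mn}{mn-1}\sim r^{mn-1}$ (not $r^{mn-2}$; the omitted $j$ range has $mn$ terms), and together with the separately-handled $k=0$ term $(-1)^{r+1}a_0/m^{r+1}$ this contributes $O(r^{-2})$ rather than $O(r^{-3})$ to $T_r$ --- which matches the paper's finding that the $m=2$ terms decay like $r^{-2}$, and changes nothing about convergence in either case, but your $O(r^{-\min(n+1,3)})$ bound for $m=2$ should read $O(r^{-2})$.
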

\begin{proof}
Only cases where $m\ge2$ need be considered here due to Lemma (\ref{lem:m=00003D1}),
so we will assume during this proof that $m\ge2$.

(\ref{eq:integer-powers-expr}) states that
\[
\langle\text{Tr}[(\hat{\rho}_{m}^{mn})^{r}]\rangle=\frac{\Gamma(mn)r!}{\Gamma(mn+r)}\sum_{k=0}^{m-1}\binom{m}{k+1}\binom{r-1}{k}\binom{n+r-k-1}{n-1}
\]
for any integer $r\ge1$. In addition it is known that
\[
\langle\text{Tr}[(\hat{\rho}_{m}^{mn})^{0}]\rangle=a_{0},
\]
where $a_{0}=\min(m,n)$ (see the footnote on page \pageref{fn:a_0}).
Therefore,
\begin{eqnarray}
\langle\text{Tr}[(\hat{\rho}_{m}^{mn}-\hat{\rho}_{0})^{r}]\rangle=\sum_{k=0}^{r}\binom{r}{k}(-1)^{r-k}\frac{1}{m^{r-k}}\langle\text{Tr}[\hat{\rho}_{m}^{mn}]\rangle\nonumber\\
\hphantom{\langle\text{Tr}[(\hat{\rho}_{m}^{mn}-\hat{\rho}_{0})^{r}]\rangle}=\frac{(-1)^{r}a_{0}}{m^{r}}+\sum_{k=1}^{r}\binom{r}{k}(-1)^{r-k}\frac{1}{m^{r-k}}\frac{\Gamma(mn)k!}{\Gamma(mn+k)}\nonumber\\
\times\sum_{s=0}^{m-1}\binom{m}{s+1}\binom{k-1}{s}\binom{n+k-s-1}{n-1}\nonumber\\
\hphantom{\langle\text{Tr}[(\hat{\rho}_{m}^{mn}-\hat{\rho}_{0})^{r}]\rangle}=\frac{(-1)^{r}a_{0}}{m^{r}}+\frac{\Gamma(mn)r!}{\Gamma(mn+r)}\sum_{k=1}^{r}\binom{r+mn-1}{k+mn-1}\frac{(-1)^{r-k}}{m^{r-k}}\nonumber\\
\times\sum_{s=0}^{m-1}\binom{m}{s+1}\binom{k-1}{s}\binom{n+k-s-1}{n-1}\nonumber\\
\hphantom{\langle\text{Tr}[(\hat{\rho}_{m}^{mn}-\hat{\rho}_{0})^{r}]\rangle}=\frac{(-1)^{r}a_{0}}{m^{r}}+\frac{\Gamma(mn)r!}{\Gamma(mn+r)}\sum_{k=mn}^{r+mn-1}\binom{r+mn-1}{k}\frac{(-1)^{r-k+mn-1}}{m^{r-k+mn-1}}\nonumber\\
\times\sum_{s=0}^{m-1}\binom{m}{s+1}\binom{k-mn}{s}\binom{n+k-mn-s}{n-1},\label{eq:offset-trace-expansion}
\end{eqnarray}
The various binomial coefficients at the end can be simplified using
the identity
\[
\frac{1}{(n-1)!}\frac{\partial^{n-1}}{\partial u^{n-1}}\left.\left[\frac{u^{n}}{s!}\frac{\partial^{s}}{\partial u^{s}}u^{k-mn}\right]\right|_{u=1}=\binom{k-mn}{s}\binom{n+k-mn-s}{n-1}.
\]
Substituting this into (\ref{eq:offset-trace-expansion}), we can
rearrange the summation over $k$ by the same procedure used in Section
\ref{sec:Special-Case}, giving 
\begin{eqnarray*}
\langle\text{Tr}[(\hat{\rho}_{m}^{mn}-\hat{\rho}_{0})^{r}]\rangle=\frac{(-1)^{r}a_{0}}{m^{r}}+\frac{\Gamma(mn)r!}{\Gamma(mn+r)}\sum_{s=0}^{m-1}\binom{m}{s+1}\frac{1}{(n-1)!}\\
\times\frac{\partial^{n-1}}{\partial u^{n-1}}\left[\frac{u^{n}}{s!}\frac{\partial^{s}}{\partial u^{s}}\left(\frac{1}{u^{mn}}\left(u-\frac{1}{m}\right)^{r+mn-1}\right.\right.\\
\left.\left.\left.-\sum_{k=0}^{mn-1}\binom{r+mn-1}{k}(-1)^{r-k+mn-1}\frac{u^{k-mn}}{m^{r-k+mn-1}}\right)\right]\right|_{u=1},
\end{eqnarray*}
and then expanding the two derivatives gives
\begin{eqnarray}
\langle\text{Tr}[(\hat{\rho}_{m}^{mn}-\hat{\rho}_{0})^{r}]\rangle=\frac{(-1)^{r}a_{0}}{m^{r}}+\sum_{s=0}^{m-1}\binom{m}{s+1}\frac{(-1)^{s}}{(n-1)!s!}\sum_{q=0}^{n-1}\binom{n-1}{q}\frac{n!(-1)^{q}}{(q+1)!}\nonumber\\
\times\left[\sum_{k=0}^{q+s}\binom{q+s}{k}(-1)^{k}\frac{r!(q+s+mn-k-1)!}{(r+mn-k-1)!}\left(1-\frac{1}{m}\right)^{r+mn-k-1}\right.\nonumber\\
\left.-\sum_{k=0}^{mn-1}\binom{mn-1}{k}\frac{r!(q+s+mn-k-1)!}{(r+mn-k-1)!}\frac{(-1)^{r-k+mn-1}}{m^{r-k+mn-1}}\right].
\end{eqnarray}
Finally, this gives the exact form for the terms in (\ref{eq:Lubkin-series})
(labelled $T_{r}$ for simplicity) as
\begin{eqnarray}
T_{r}=\frac{(-1)^{r}m^{r}}{r(r+1)}\langle\text{Tr}[(\hat{\rho}_{m}^{mn}-\hat{\rho}_{0})^{r+1}]\rangle\nonumber\\
\hphantom{T_{r}}=-\frac{a_{0}}{mr(r+1)}+\sum_{s=0}^{m-1}\binom{m-1}{s}\frac{m(-1)^{s}}{(s+1)!}\sum_{q=0}^{n-1}\binom{n-1}{q}\frac{n(-1)^{r+q}}{(q+1)!}\nonumber\\
\times\left[\sum_{k=0}^{q+s}\binom{q+s}{k}(-1)^{k}\frac{(r-1)!(q+s+mn-k-1)!}{(r+mn-k)!}\frac{(m-1)^{r+mn-k}}{m^{mn-k}}\right.\nonumber\\
\left.-\sum_{k=0}^{mn-1}\binom{mn-1}{k}\frac{(r-1)!(q+s+mn-k-1)!}{(r+mn-k)!}\frac{(-1)^{mn-k}}{m^{mn-k}}\right].\label{eq:Lubkin-term}
\end{eqnarray}
This expression is now compatible with Lemma \ref{lem:sum-convergence},
as it gives each term in Lubkin's series as a sum over a fixed number
of terms. Therefore, to determine if (\ref{eq:Lubkin-series}) converges
we only need to determine what the limiting behaviour of the dominant
term in (\ref{eq:Lubkin-term}) is.

When $m=2$, the limiting behaviour is given by the sequence $B_{r}^{(2)}=r^{-2}$,
which is comparable only to $(r-1)!/(r+2n-k)!$ with $k=2n-1$, as
well as to the $a_{0}$ term %
\footnote{Lemma \ref{lem:sum-convergence} specifically requires it to be comparable
to a single term, but the two terms here can be summed to produce
a single term which is still comparable with $B_{r}^{(2)}$.%
}. When $m>2$, the limiting behaviour is given by 
\[
B_{r}^{(m)}=\frac{(m-1)^{r}}{r^{(m-1)(n-1)+2}},
\]
which is comparable to $(r-1)!(m-1)^{r}/(r+mn-k)!$ when $k=m+n-2$.
It therefore follows from the fact that 
\[
\sum_{r=1}^{\infty}B_{r}^{(2)}=\frac{1}{\zeta(2)}
\]
converges absolutely but
\[
\sum_{r=1}^{\infty}B_{r}^{(m)}
\]
does not (in addition to Lemma \ref{lem:m=00003D1}), that (\ref{eq:Lubkin-series})
converges if and only if $m\le2$.
\end{proof}

\section{Discussion and Conclusions}

Lubkin's original derivation of his approximation was based on the
assumption that (\ref{eq:Lubkin-series}) converged quickly to a finite
value, allowing truncations of the series to be used as approximations.
However, we have now proved that the only cases of significance where
the series converges %
\footnote{The series also converges when $m=1$, but the convergence is trivial
and the entropy is zero, so this is not of any practical benefit.%
} are those where $m=2$, Lubkin's series might still be used to compute
the entropy in these cases then, but it will only be of use in single-qbit
systems. In all other cases, Lubkin's series diverges rapidly, so
being equal to a truncation of this series is not sufficient proof
that Lubkin's guess is a good approximation for the entropy. This
is surprising, however, as we are able to confirm the validity of
Lubkin's guess by other methods (see Appendix 3).

It is worth noting, however, that the proof of Lubkin's approximation
given in Appendix 3 also gives us and alternative (and better) approximation
to the entropy i.e.
\[
\langle S_{m,n}\rangle\simeq\ln2-\frac{m^{2}-1}{2mn}
\]
when $n\gg1$, which has a broader range of validity \emph{and} a
smaller error in general than Lubkin's guess.

The work shown in this paper, specifically the closed-form expressions
for $\langle\text{Tr}[(\hat{\rho}_{m}^{mn})^{r}]\rangle$ found in
Section \ref{sec:Preliminaries}, have an additional application;
as we will prove in another paper \cite{Dyer2014a}, $\langle\text{Tr}[(\hat{\rho}_{m}^{mn})^{r}]\rangle$
can be used to find a generating function for enumerating sets of
combinatorial hypermaps. Specifically,
\begin{eqnarray*}
\langle\text{Tr}[(\hat{\rho}_{m}^{mn})^{r}]\rangle & = & \frac{\Gamma(mn)}{\Gamma(mn+r)}\sum_{e,v}h_{r}^{(1)}(e,v)m^{e}n^{v}\\
 & \equiv & \frac{\Gamma(mn)}{\Gamma(mn+r)}P_{r}(m,n)
\end{eqnarray*}
for any positive integers $m$, $n$ and $r$, where $h_{r}^{(1)}(e,v)$
is the number of rooted hypermaps with one face, $e$ edges and $v$
vertices (these objects are defined in that paper, and also discussed
in detail in \cite{Lando2004}). Expressions such as (\ref{eq:real-powers-expr})
then allow us to find closed-form expressions for the generating function
$P_{r}(m,n)$ which generates $h_{r}^{(1)}(e,v)$. Generating functions
of this type are a very powerful tool in enumerative combinatorics,
as they have strong connections to the underlying structure of the
classes they enumerate \cite{Stanley1997}, so being able to express
them in closed form in this way is a significant result.

\section*{Acknowledgements}

This problem was suggested to me by Bernard Kay, who had begun the
process of evaluating terms in Lubkin's series by a different method,
and I am grateful to him for sharing his unpublished work on that
with me, as well as providing advice and help while writing the paper.

This work was supported by an EPSRC-funded studentship through the
Department of Mathematics at the University of York.

\section*{Appendix 1: An alternative proof of Page's exact entropy formula}

In this appendix we re-derive Page's explicit von Neumann entropy
formula \cite{Page1993}. This method demonstrates the parallel between
the proof of Theorem \ref{thm:real-powers} and the method previously
used by Sen for this purpose \cite{Sen1996}.

As stated in (\ref{eq:real-powers-expr}),
\[
\langle\text{Tr}[(\hat{\rho}_{m}^{mn})^{r}]\rangle=\frac{\Gamma(mn)}{r\Gamma(mn+r)}\sum_{k=0}^{m-1}\frac{(-1)^{k}\Gamma(m+r-k)\Gamma(n+r-k)}{k!\Gamma(r-k)\Gamma(m-k)\Gamma(n-k)}
\]
for general $r$. We can use this to evaluate the mean von Neumann
entropy using the fact that
\begin{eqnarray*}
\langle S_{m,n}\rangle & = & -\langle\text{Tr}[\hat{\rho}_{m}^{mn}\ln\hat{\rho}_{m}^{mn}]\rangle\\
 & = & -\lim_{r\rightarrow1}\frac{\partial}{\partial r}\langle\text{Tr}[(\hat{\rho}_{m}^{mn})^{r}]\rangle.
\end{eqnarray*}
First, differentiating gives
\begin{eqnarray*}
-\frac{\partial}{\partial r}\langle\text{Tr}[(\hat{\rho}_{m}^{mn})^{r}]\rangle & = & \frac{\Gamma(mn)}{r\Gamma(mn+r)}\sum_{k=0}^{m-1}\frac{(-1)^{k}\Gamma(m+r-k)\Gamma(n+r-k)}{k!\Gamma(r-k)\Gamma(m-k)\Gamma(n-k)}\\
 &  & \times\left(\psi(mn+r)-\psi(m+r-k)-\psi(n+r-k)+\psi(r-k)+\frac{1}{r}\right).
\end{eqnarray*}
When we take the limit of $r\rightarrow1$, the $1/\Gamma(r-k)$ factor
causes most terms to vanish in cases when $k>0$. The only ones which
remain are those where $\psi(r-k)$ is in the numerator:
\begin{eqnarray*}
\langle S_{m,n}\rangle & = & -\lim_{r\rightarrow1}\frac{\partial}{\partial r}\langle\text{Tr}[\hat{\rho}_{s}^{r}]\rangle\\
 & = & \psi(mn+1)-\psi(m+1)-\psi(n+1)+\psi(1)+1\\
 &  & +\frac{1}{mn}\sum_{k=1}^{m-1}(m-k)(n-k)\lim_{r\rightarrow1}\frac{(-1)^{k}\psi(r-k)}{\Gamma(k+1)\Gamma(r-k)}\\
 & = & \psi(mn+1)-\psi(m+1)-\psi(n+1)+\psi(1)+1\\
 &  & +\frac{1}{mn}\sum_{k=1}^{m-1}\frac{(m-k)(n-k)}{k},
\end{eqnarray*}
where here we use the fact that 
\begin{eqnarray*}
\lim_{r\rightarrow1}\frac{\psi(r-k)}{\Gamma(r-k)} & = & \lim_{r\rightarrow1}\frac{1}{[\Gamma(r-k)]^{2}}\frac{\partial}{\partial r}\Gamma(r-k)\\
 & = & -\lim_{r\rightarrow1}\frac{1}{\text{Res}_{x=1-k}[\Gamma(x)]}\\
 & = & (-1)^{k}\Gamma(k)
\end{eqnarray*}
for positive integers $k$.

Then we use the identity
\[
\psi(b)-\psi(a)=\sum_{k=a}^{b-1}\frac{1}{k},
\]
on the pairs $(\psi(mn+1)-\psi(n+1))$ and $(\psi(m-1)-\psi(1))$,
giving
\[
\langle S_{m,n}\rangle=\sum_{k=n+1}^{mn}\frac{1}{k}-\frac{m-1}{2n},
\]
which agrees with Page's formula \cite{Page1993}.

\section*{Appendix 2: A gamma-function summation identity}
\begin{lem}
\label{lem:Gamma-lemma}
\begin{eqnarray}
\sum_{i=0}^{a}\frac{(-1)^{i}}{\Gamma(i+1)\Gamma(r-i+1)} & = & \frac{(-1)^{a}}{r\Gamma(a+1)\Gamma(r-a)}\label{eq:lemma-ident}
\end{eqnarray}
for any integer $a\ge0$.\end{lem}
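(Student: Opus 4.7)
The plan is to prove this by straightforward induction on $a$, since both sides are well-suited to a one-step recursion: the left-hand side gains a single summand when $a$ increments, and the right-hand side should transform into the next case via one application of the functional equation $\Gamma(z+1) = z\Gamma(z)$.

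For the base case $a=0$, both sides collapse to $1/[r\Gamma(r)]$ after rewriting $\Gamma(r+1) = r\Gamma(r)$, so the identity is immediate. For the inductive step, assuming the formula holds for $a$, I would write
\[
\sum_{i=0}^{a+1}\frac{(-1)^{i}}{\Gamma(i+1)\Gamma(r-i+1)} = \frac{(-1)^{a}}{r\,\Gamma(a+1)\Gamma(r-a)} + \frac{(-1)^{a+1}}{\Gamma(a+2)\Gamma(r-a)},
\]
factor out $(-1)^{a}/\Gamma(r-a)$, put the bracketed expression over the common denominator $r\Gamma(a+2)$, and simplify the numerator to $(a+1)-r = -(r-a-1)$. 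Finally I would absorb the factor $(r-a-1)$ into $\Gamma(r-a)$ via $\Gamma(r-a) = (r-a-1)\Gamma(r-a-1)$, which yields exactly the right-hand side with $a$ replaced by $a+1$.

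There is no real obstacle here; the only thing to watch is the domain of $r$. Since the identity is needed for arbitrary real $r$ (and is used in the proof of Theorem \ref{thm:real-powers} for such $r$), one should note that both sides are meromorphic in $r$ and the manipulations above are valid wherever all $\Gamma$-factors in the denominator are finite, with removable singularities handled by continuity. Equivalently, when $r$ is a positive integer with $r \ge a+1$, multiplying through by $\Gamma(r+1)$ recasts the claim as the familiar partial alternating binomial sum
\[
\sum_{i=0}^{a}(-1)^{i}\binom{r}{i} = (-1)^{a}\binom{r-1}{a},
\]
and the general case then follows because both sides of \eqref{eq:lemma-ident} are rational functions of $r$ agreeing on infinitely many values.
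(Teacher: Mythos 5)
Your proposal is correct and follows essentially the same route as the paper: induction on $a$, verifying the base case $a=0$ directly and absorbing the factor $(r-a-1)$ into $\Gamma(r-a)$ via the functional equation in the inductive step. Your additional remark on the domain of $r$ (meromorphic continuation from the integer/binomial case) is a sensible precaution the paper leaves implicit, but it does not change the argument.
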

\begin{proof}
(\ref{eq:lemma-ident}) holds when $a=0$, as both sides of the equation
simply equal $1/\Gamma(r+1)$ in that case. Now if we assume that
(\ref{eq:lemma-ident}) holds for $a=N\ge0$, extending the sum to
$a=N+1$ gives
\begin{eqnarray*}
\sum_{i=0}^{N+1}\frac{(-1)^{i}}{\Gamma(i+1)\Gamma(r-i+1)} & = & \frac{(-1)^{N}}{r\Gamma(N+1)\Gamma(r-N)}+\frac{(-1)^{N+1}}{\Gamma(N+2)\Gamma(r-N)}\\
 & = & \left(1-\frac{N+1}{r}\right)\frac{(-1)^{N+1}}{\Gamma(N+2)\Gamma(r-N)}\\
 & = & \frac{(-1)^{N+1}}{r\Gamma(N+2)\Gamma(r-N-1)},
\end{eqnarray*}
which also agrees with (\ref{eq:lemma-ident}). It therefore follows
by induction that (\ref{eq:lemma-ident}) holds for any $a\ge0$.
\end{proof}

\section*{Appendix 3: Confirmation of Lubkin's approximation using Page's formula}

Lubkin's proposed approximation for the entropy is \cite{Lubkin1978,Page1993}
\[
\langle S_{m,n}\rangle\simeq\ln m-\frac{m^{2}-1}{2mn+2}
\]
when $m\gg n$. However, the exact meaning of the `goodness' of this
approximation was left ambiguous. Here we prove that
\[
\langle S_{m,n}\rangle\simeq\ln m-\frac{m^{2}-1}{2mn}+\mathcal{O}\left(\frac{1}{n^{2}}\right),
\]
 and use this to prove the that Lubkin's guess is a good approximation
(although still not ideal).

The exact formula for $S_{m,n}$ as conjectured by Page \cite{Page1993}
and later proven by various authors \cite{Foong1994,Sanchez-Ruiz1995,Sen1996}
is
\[
\langle S_{m,n}\rangle=\sum_{k=n+1}^{mn}\frac{1}{k}-\frac{m-1}{2n}
\]
for $n\ge m$. This can equivalently be written
\begin{equation}
\langle S_{m,n}\rangle=H_{mn}-H_{n}-\frac{m-1}{2n},\label{eq:Harmonic-entropy}
\end{equation}
where $H_{n}$ is the $n^{\mbox{th}}$ harmonic number. $H_{n}$ has
an asymptotic expansion \cite{Conway1996}
\[
H_{n}=\ln n+\gamma+\frac{1}{2n}+\mathcal{O}\left(\frac{1}{n^{2}}\right),
\]
where $\gamma$ is Euler's constant. Substituting this into (\ref{eq:Harmonic-entropy}),
we get
\begin{eqnarray}
\langle S_{m,n}\rangle & = & \ln mn-\ln n+\frac{1}{2mn}-\frac{1}{2n}-\frac{m-1}{2n}+\mathcal{O}\left(\frac{1}{m^{2}n^{2}}\right)+\mathcal{O}\left(\frac{1}{n^{2}}\right)\nonumber \\
 & = & \ln m-\frac{m^{2}-1}{2mn}+\mathcal{O}\left(\frac{1}{n^{2}}\right).\label{eq:S-bigO}
\end{eqnarray}
This method is the origin of the asymptotic expansion Page mentions
in his paper \cite{Page1993}.

Explicitly, (\ref{eq:S-bigO}) means that there is some positive constant
$K$ such that
\[
\left|\langle S_{m,n}\rangle-\ln m+\frac{m^{2}-1}{2mn}\right|\le\frac{K}{n^{2}}
\]
 for all $m$ and $n$. To get an idea of the scale of $K$, we evaluate
the remainder using additional terms from the asymptotic expansion
of $H_{n}$. Doing this, we get
\[
\delta=\langle S_{m,n}\rangle-\ln m+\frac{m^{2}-1}{2mn}=\frac{m^{2}-1}{12m^{2}n^{2}}+\mathcal{O}\left(\frac{1}{n^{3}}\right).
\]
This suggests that $K$ is $1/12$, and numerical calculation of the
exact error in all cases with $m\le n$ up to $n=100$ shows the absolute
error increasing monotonically in both $m$ and $n$ up to $0.083324$
at $m=n=100$, apparently tending towards $1/12$. We can therefore
conclude that 
\begin{equation}
\langle S_{m,n}\rangle\simeq\ln-\frac{m^{2}-1}{2mn}\label{eq:my-approx}
\end{equation}
is a good approximation when $n\gg1$ with an error of $\mathcal{O}(1/n^{2})$
(the $n\gg1$ condition is to ensure that the error is small).

Lubkin's approximation differs from (\ref{eq:my-approx}), but when
the denominator in the $(m^{2}-1)/(2mn+2)$ term is expanded binomially,
the difference between the two approximations is iteslf $\mathcal{O}(1/n^{2})$.
The error in Lubkin's approximation is therefore
\[
\delta_{Lubkin}=\langle S_{m,n}\rangle-\ln m+\frac{m^{2}-1}{2mn+2}=\delta-\frac{m^{2}-1}{2m^{2}n^{2}}=-\frac{5(m^{2}-1)}{12m^{2}n^{2}}+\mathcal{O}\left(\frac{1}{n^{3}}\right).
\]
We therefore conclude that Lubkin's guess is \emph{also} a good approximation,
although the error is approximately five times larger than the error
in (\ref{eq:my-approx}).

While Lubkin's original guess was given the condition $m\ll n$, it
is now clear that all that is needed is that $n\gg1$ (the statement
(\ref{eq:S-bigO}) is true in general%
\footnote{On a related note, expanding $H_{mn}$ instead as
\[
H_{mn}=\ln mn+\mathcal{O}\left(\frac{1}{mn}\right),
\]
and seeing that $\mathcal{O}(1/n^{2})$ implies $\mathcal{O}(1/(mn))$
when $m\le n$, we get that
\begin{eqnarray}
\langle S_{m,n}\rangle & = & \ln mn-\ln n-\frac{1}{2n}-\frac{m-1}{2n}+\mathcal{O}\left(\frac{1}{mn}\right)+\mathcal{O}\left(\frac{1}{n^{2}}\right)\nonumber \\
 & = & \ln m-\frac{m}{2n}+\mathcal{O}\left(\frac{1}{mn}\right).\label{eq:O(1/mn)}
\end{eqnarray}
Page states a result similar to this in his paper, saying that the
average deviation of the entropy from the maximal entropy $S_{m}^{max}=\ln m$
is
\[
I_{m,n}=\ln m-\langle S_{m,n}\rangle=\frac{m}{2n}+\mathcal{O}\left(\frac{1}{mn}\right)
\]
when both $m$ and $n$ are large (i.e. $1\ll m\le n$) \cite{Page1993}.
(\ref{eq:O(1/mn)}) shows that this is in fact true in general, and
not just when the parameters are large.%
}, but the $n^{-2}$ dependence of the remainder means the error in
(\ref{eq:my-approx}) is smallest -- and therefore the approximation
is most accurate -- when $n$ is large), and requiring that $m$ be
small in comparison to $n$ has little effect on the accuracy. Thus,
while we can now see that Lubkin's approximation is itself good, it
is improved upon by the approximation (\ref{eq:my-approx}).

\bibliographystyle{spphys}
\bibliography{Lubkin_Convergence}

\end{document}